\begin{document}
\title{A Discussion on Particle Number and Quantum Indistinguishability}
\author{{\sc Graciela
Domenech}\thanks{%
Fellow of the Consejo Nacional de Investigaciones Cient\'{\i}ficas
y T\'ecnicas (CONICET)} \ {\sc and} \ {\sc Federico Holik}}

\maketitle

\begin{center}

\begin{small}
Instituto de Astronom\'{\i}a y F\'{\i}sica del Espacio (IAFE)\\
Casilla de Correo 67, Sucursal 28, 1428 Buenos Aires, Argentina\\

\end{small}
\end{center}

\vspace{1cm}

\begin{abstract}
\noindent

The concept of individuality in quantum mechanics shows radical
differences from the concept of individuality in classical
physics, as E. Schr\"{o}dinger pointed out in the early steps of
the theory. Regarding this fact, some authors suggested that
quantum mechanics does not possess its own language, and
therefore, quantum indistinguishability is not incorporated in the
theory from the beginning. Nevertheless, it is possible to
represent the idea of quantum indistinguishability with a first
order language using quasiset theory ($Q$). In this work, we show
that $Q$ cannot capture one of the most important features of
quantum non individuality, which is the fact that there are
quantum systems for which particle number is not well defined. An
axiomatic variant of $Q$, in which quasicardinal is not a
primitive concept (for a kind of quasisets called finite
quasisets), is also given. This result encourages the searching of
theories in which the quasicardinal, being a secondary concept,
stands undefined for some quasisets, besides showing explicitly
that in a set theory about collections of truly indistinguishable
entities, the quasicardinal needs not necessarily be a primitive
concept.
\end{abstract}
\bigskip
\noindent

\newtheorem{theo}{Theorem}[section]
\newtheorem{definition}[theo]{Definition}

\newtheorem{lem}[theo]{Lemma}

\newtheorem{prop}[theo]{Proposition}

\newtheorem{coro}[theo]{Corollary}

\newtheorem{exam}[theo]{Example}

\newtheorem{rema}[theo]{Remark}{\hspace*{4mm}}

\newtheorem{example}[theo]{Example}

\newcommand{\proof}{\noindent {\em Proof:\/}{\hspace*{4mm}}}

\newcommand{\qed}{\hfill$\Box$}

\newcommand{\ninv}{\mathord{\sim}} 

\begin{small}
\centerline{\em Key words:  quasisets, particle number,
quasicardinality, quantum indistinguishability.}
\end{small}

\bibliography{pom}

\newtheorem{axiom}[theo]{Axiom}

\newpage

\section{Introduction}

It is a well established result that the concept of individuality
in quantum mechanics clashes radically with its classical
counterpart. While in classical physics particles can be
considered as individuals without giving rise to consistence
problems, in quantum mechanics this is not the case.
Contradictions arise if one intends to individuate elementary
particles. The responses to this problem range from the claim that
there are no elementary particles at all to the assertion that
there are particles but they are intrinsically indistinguishable
(i.e., indistinguishable in an ontological sense).

The problem is solved in the formalism by imposing symmetrization
postulates (in non relativistic quantum mechanics) or,
equivalently, by imposing commutation relations on the
creation-annihilation operators (in quantum field theory). But
these solutions have a flow, well illustrated in \cite{Why
quasisets} and \cite{fyk}. The objection is that all these
approaches make use of a mathematical trick referred to in
\cite{Why quasisets} as the Weyl's strategy. This trick consists
in treating particles as if they were individuals and then
imposing symmetrization assumptions, thus masking individuality.
We find this a source of conceptual confusion and mathematical (or
axiomatic) redundancy, for if particles are not individuals (and
this implies that they cannot be labeled in the usual way), one
could ask why do permutations make sense. Is the symmetrization
postulate really necessary or is it a necessity of our own (and
inadequate) language?

Many authors pointed out the importance of developing alternative
ways to describe quantum indistinguishability, reproducing the
results obtained by standard techniques, but assuming in every
step of the deduction that elementary particles of the same class
are intrinsically indistinguishable from the beginning (see, for
example, \cite{Heinz Post}, \cite{Why quasisets} and \cite{fyk}),
without making appeal to Weyl's strategy variants. Another claim
is that quantum mechanics does not possess its own language, but
it uses a portion of functional analysis which is itself based on
set theory, and thus finally related to classical experience. This
statement was posed by Y. Manin \cite{Manin2}, the Russian
mathematician who suggested that standard set theories (as
Zermelo-Fraenkel, $ZF$) are influenced by every day experience,
and so it would be interesting to search for set theories which
inspire its concepts in the quantum domain. This is known as the
Manin's problem \cite{Manin}. In this spirit, and looking for a
solution to the Manin's problem a quasiset theory ($Q$ in the
following) was developed \cite{Why quasisets}, \cite{Un estudio}.
We base our work in the axiomatic system as presented in \cite{Un
estudio}.

Quasiset theory seems to be adequate to represent as ``sets'' of
some kind (quasisets) the collections of truly indistinguishable
entities. This aim is reached in $Q$ because equality is not a
primitive concept, and there exists certain kinds of urelemente
(m-atoms) for which only an indistinguishability relationship
applies. So, in $Q$, non individuality is incorporated by
proposing the existence of entities for which it has no sense to
assert that they are identical to themselves or different from
others of the same class.

$Q$ contains a copy of Zermelo-Fraenkel set theory plus Urelemente
($ZFU$). These Urelemente are called M-atoms. This feature divides
the theory in two parts. One region involves only the elements of
$ZFU$, and the other one contains quasisets whose elements can be
truly indistinguishable entities. Quasisets containing only
indistinguishable elements are called ``pure quasisets''. We will
refer to the $ZFU$ copy of quasisets as ``the classical part of
the theory '', as in \cite{Un estudio}. Indistinguishability is
modeled in this theory using a primitive binary relation $\equiv$
(indistinguishability) and a new class of atoms, called m-atoms,
which stands for expressing the existence of quanta in the theory
(\cite{Un estudio}). So, in the frame of $Q$, when we speak of
m-atoms of the same class, the only thing that we can assert about
them is that they are indistinguishable, and nothing else makes
sense, for expressions like $x=y$ are not well formed formulas for
them. This is to say that we cannot make assertions about their
identity, i.e., it has no sense to say that an m-atom is equal or
different of other m-atom of the same class.

It is important to remark that in $Q$, indistinguishability does
not imply identity, and so it is possible that even being
indistinguishable, two m-atoms belong to different quasisets, thus
avoiding the collapse of indistinguishability in classical
identity \cite{Un estudio}.

$Q$ is constructed in such a way that allows the existence of
collections of truly indistinguishable objects, and thereof it is
impossible to label the elements of pure quasisets. For this
reason, the construction used to assign cardinals to sets of
standard $ZFU$ theories cannot be applied any more.

But even if electrons are indistinguishable (in an ontological
sense), every physician knows that it is possible to assert that
for example, a Litium atom has three electrons. It is for that
reason that $Q$ should allow quasisets to have some kind of
associated cardinal. In $Q$ this is solved postulating that a
cardinal number is assigned to every quasiset (remember that there
is a copy of $ZFU$ in $Q$). Some other properties of the standard
cardinal are postulated too. This rule for the assignment of
cardinals uses a unary symbol $qc()$ as a primitive concept. Here
we will call ``axioms of quasicardinality '' the collection of
axioms related to the $qc()$ functional letter. So in $Q$, the
quasicardinal is a primitive concept alike $ZF$, in which the
property that to every set corresponds a single cardinal number
can be derived from the axioms \cite{Halmos}.

On the other hand, this form of introducing the quasicardinal
implies that every quasiset has an associated cardinal, i.e.,
every quasiset has a well defined number of elements. But the idea
that an aggregate of entities must necessarily have an associated
number which represents the number of entities is based in our
every day experience. As we shall discuss below, (section
\ref{s:QaR}) there are quantum systems for which it is not allowed
to assign a number of particles in a consistent manner. These
systems can be found in states which are not eigenstates of the
particle number operator. Another problematic situation (which
will be not discussed in this work) is that of the frame
dependence of the particle number operator. In relativistic
quantum field theory, the vacuum state in a Minkowskian frame
(which has particle number equal to zero in this frame) is seen as
a ``plenty of particles'' state for a Rindler observer
\cite{Rindler}, thus cardinality seeming frame dependent.

In \ref{s:QaR} we will show that from the fact that in $Q$ every
quasiset has an associated cardinal (its quasicardinal), it
follows that it is not possible to represent systems which are not
eigenstates of the particle number operator as quasisets. In
\ref{s:Individuality} we discuss the relationship of the concept
of particle number with the measurement process. Based on the
discussions previously posed, in \ref{s:Reformulation} we explore
the possibility of modifying $Q$ in such a way that the
quasicardinal becomes a derived concept. Finally, in
\ref{s:Conclusions} we draw our conclusions.

\section{Quantity in Quantum Mechanics}\label{s:QaR}

As is well known, performing a single measurement in a quantum
system does not allow to attribute the result of this measurement
to a property which the system possesses before the measurement is
performed without giving rise to serious problems
\cite{Mittelstaedt}. What is the relationship between this fact
and the quantity of particles in a quantum system? Take for
example an electromagnetic field (with a single frequency for
simplicity) in the following state:
\begin{equation}\label{e:GATOS}
\mid\psi\rangle=\alpha\mid 1\rangle+\beta\mid 2\rangle
\end{equation}
where $\mid 1\rangle$ and $\mid 2\rangle$ are eigenvectors of the
particle number operator with eigenvalues $1$ and $2$
respectively, and $\alpha$ and $\beta$ are complex numbers which
satisfy $\mid\alpha\mid^{2}+\mid\beta\mid^{2}=1$. If a measurement
of the number of particles of the system is made, one or two
particles will be detected, with probabilities
$\mid\alpha\mid^{2}$ and $\mid\beta\mid^{2}$ respectively. And any
other possibility is excluded. Suppose that in a single
measurement two particles are detected. What allows us to conclude
that the system had two particles before the measurement was
performed? The assertion that the number of particles is varying
in time because particles are being constantly created and
destroyed is also problematic, because it assumes that at each
instant the number of particles is well defined. Only in case that
it is known with certainty that the system is in an eigenstate of
the particle number operator we can say that the system has a well
defined cardinal. There would be no problem too if it is known
with certainty that the system is prepared in an statistical
mixture. In this case, the corresponding density operator would
be:
\begin{equation}
\rho_{m}=\mid\alpha\mid^{2}\mid 1\rangle\langle
1\mid+\mid\beta\mid^{2}\mid 2\rangle\langle 2\mid
\end{equation}
where the subindex ``m'' stands for statistical mixture. But the
density operator corresponding to (\ref{e:GATOS}) is:
\begin{equation}
\rho=(\alpha\mid1\rangle
+\beta\mid2\rangle)(\alpha^{*}\langle1\mid+\beta^{*}\langle2\mid)
\end{equation}
which is the same as:
\begin{equation}\label{e:MGATO}
\rho=\mid\alpha\mid^{2}\mid 1 \rangle\langle
1\mid+\mid\beta\mid^{2}\mid 2\rangle\langle
2\mid+\alpha\beta^{*}\mid 1\rangle\langle2\mid+\alpha^{*}\beta\mid
2\rangle\langle 1\mid
\end{equation}
The presence of interference terms in the last equation implies
that difficulties will appear in stating that, after a single
measurement, the system has the quantity of particles obtained as
the result of the measurement. In this case, the incapability of
knowing the particle number would not come from our ignorance
about the system, but from the fact that in this state, the
particle number is not even well defined.

Taking into account these considerations, it is worth asking: is
it possible to represent a system prepared in the state
(\ref{e:MGATO}) in the frame of quasiset theory? Which place would
correspond to a system like (\ref{e:MGATO}) in that theory? If
such system could be represented as a quasiset, then it should
have an associated quasicardinal, for every quasiset has it. But
this does not seem to be proper, considering what we have
discussed in this section. It follows that it does not appear
reasonable to assign a quasicardinal to every quasiset if quasiset
theory has to include all bosonic and fermionic systems (in all
their possible many particle states). Therefore, a system in the
state (\ref{e:MGATO}) cannot be included in $Q$ as a quasiset.
Yet, it would be interesting to study the possibility of including
systems in those states (such as (\ref{e:MGATO})) in the
formalism. A possible way out is to reformulate $Q$ in such a way
that the quasicardinal is not to be taken as a primitive concept,
but as a derived one, turning into a property that some quasisets
have and some others do not (in analogy with the property ``being
a prime number'' of the integers). Those quasisets for which the
property of having a quasicardinal is not satisfied, would be
suitable to represent quantum systems with particle number not
defined (such as (\ref{e:MGATO})). This property would also fit
well with the position that asserts that particle interpretation
is not adequate in, for example, quantum electrodynamics. With
such a modification of $Q$, a field (in any state) could always be
represented as a quasiset, avoiding the necessity of regarding the
field as a collection of classical ``things''. On the contrary,
the field would be described by a quasiset which has a defined
quasicardinal only in special cases, but not in general. And for
that reason this quasiset could not be interpreted as simply as a
collection of particles (because it seems reasonable to assume
that a collection of particles, indistinguishable or not, must
always have a well defined particle number).

Another possibility would be the introduction of a vector space
similar to a Hilbert space, but constructed using the non
classical part of $Q$. This option will be considered in a
forthcoming paper. In the following section we discuss the concept
of particle number as the result of a process (the measurement
process). We will discuss its relationship with the idea of
individuality suggested by experiments, and relate it with the
possibility of developing quasicardinal as a derived concept.

\section{Particle number as the result of a process}\label{s:Individuality}

In the last section, we suggested that the development of a
$Q$-like theory in which quasicardinal is a derived concept could
be useful if one aims to represent as quasisets, quantum systems
with particle number not defined. In this section, we discuss the
experimental relation of the concept of ``particle number'' and
find new arguments for the development of quasicardinal as a
derived concept. We start posing the question: In which sense do
we talk about quantum systems composed, for example, of a single
photon? We certainly know about the existence of the
electromagnetic field, and that this field obeys the rules of
quantum mechanics. How do we decide if the field is in a single
photon state or not? What do we mean when we use the words
``single photon''? These questions find an answer in our
laboratory experience, i.e., making measurements on the system.
The measurement process (which in the case of photons could be
described by the theory of interaction of the electromagnetic
field with matter) allows us to construct an idea of individuality
which in time allows us to speak about the photon as a particle.
In a similar way, and always mediated by a measuring process, we
talk about the other particles (electrons, protons, etc.). But
these corpuscular features of quantum systems differ notably from
the classical ones, and though experiments suggest an idea of
individuality, it is well established that this does not enable us
to consider particles as individuals, at least not in an equal
sense to classical individuality. Elementary particles cannot be
considered as individuals, as E. Schr\"{o}dinger pointed out in
the early days of quantum mechanics \cite{What is}. In spite of
these difficulties, we continue speaking about photons, electrons,
etc., using a jargon which has a lot of points in common with
classical physics, source of conceptual confusion.

Let us consider an example to illustrate how particle number
arises as a result of the measuring process. A photoelectric
detector consists in its fundamental aspects of an atom that can
be ionized due to the interaction with the electromagnetic field.
The signal (a current originated by the ionized atom) must be
amplified in order to be detected. The amplified signal is a
(\emph{macroscopical}) current, and we say that the intensity of
this current is proportional to the quantity of ``absorbed
photons'' in the volume of the detector (in practice, composed of
many atoms). In the limit of single photon states, we would
observe a single current pulse each time a photon is detected.
Thus, we see that once the detection mechanisms are considered, it
is possible to assign to some quantum systems an associated
number, which represents the ``particle number''. It is important
to point out that the so called ``particle number'' only appears,
in general, after the measurement process is performed. In other
words, experimental experience is the condition for particle
interpretation of the corresponding theoretical concepts (for
example, of the particle number operator). And we have already
mentioned that the measurement process almost always implies the
modification of the original state, and that the result of the
measurement cannot be attributed in general to a property
pertaining to the system before the measure is performed. In
particular, it is not true that a particle number can be always
assigned in a consistent manner, as we saw in the last section.
Thus, counting the quantity of elements in quantum mechanics (here
understood as measuring particle number) is qualitatively
different from counting the quantity of elements of a classical
system. In particular, in quantum mechanics the system is usually
destroyed or modified when counted, alike the classical case,
where the counting process can be made in principle without
disturbing the system.

Nevertheless, we know that there exist in nature systems for which
it is possible to assign a cardinal in a consistent manner (a well
defined number of particles, as for example the electrons of a
Litium atom, or single photon states). But they cannot simply be
considered as aggregates of individuals as if they were
distinguishable. This is to say that there should exist the
possibility of counting without distinguishing. If this were not
the case, physicians would have never talked about something like
``number of indistinguishable particles''. Here, the word
``counting'' is taken in the sense of assigning in a consistent
manner a ``number of elements'' to a system which is not so simple
as an individuals aggregate. For example, we could count how many
electrons has an Helium atom imagining the following process
(perhaps not the best, but possible in principle). Put the atom in
a cloud chamber and use radiation to ionize it. Then we would
observe the tracks of both, an ion and an electron. It is obvious
that the electron track represents a system of particle number
equal to one and, of course, we cannot ask about the identity of
the electron (for it has no identity at all), but the counting
process does not depend on this query. The only thing that cares
is that we are sure that the track is due to a single electron
state, and for that purpose, the identity of the electron does not
matter. If we ionize the atom again, we will see the track of a
new ion (of charge $2e$), and a new electron track. Which electron
is responsible of the second electron track? This query is ill
defined, but we still do not care. Now, the counting process has
finished, for we cannot extract more electrons. The process
finished in two steps, and so we say that an Helium atom has two
electrons, and we know that, as the wave function of the electrons
is an eigenstate of the particle number operator, no problem of
consistence will arise in any other experiment if we make this
assertion. In \cite{Dalla Chiara} Dalla Chiara and Toraldo di
Francia had already noted that we know experimentally that the
Helium atom has two electrons, because we can ionize it and
extract two separate electrons. They were looking for experimental
and theoretical grounds for developing a Quaset theory (for a
comparison between $Q$ and Quaset theory see \cite{una compara}).
In the same way, we explore on experimental experience to justify
the search of a $Q$-like theory in which quasicardinal is a
derived concept.

From the example of the ionized Helium atom, we find that the
process of counting the elements of a given ``collection''
extracting them one by one can be applied to some quantum systems
without giving rise to serious contradictions. Then, we should be
able of counting the Urelemente of some quasisets too. A radical
difference between counting the electrons of an atom and counting
the elements in a collection of classical objects in the way shown
above is that, in the classical case we can ask about the identity
of the extracted element at each step while, in the case of the
atom, this cannot be done. But this fact, does not alter the
essence of the counting process and we will exploit this fact. In
the following section, we will translate this idea to the language
of $Q$. As we have already mentioned, $Q$ describes collections of
truly indistinguishable objects as quasisets and the quasicardinal
is introduced as a primitive concept. The latter is justified
arguing that indistinguishability prohibits well ordering, and for
that reason the possibility of counting $\grave{a}\ la$ $ZF$. We
agree that quasisets cannot be counted in the same form as in
$ZF$, but our point is that it should be interesting to search for
other ways of counting, motivated by physical examples.

Experiments on quantum systems sometimes show corpuscular
features, and this suggests an idea of individuality. This idea is
a base for developing the concept of particle and
\emph{subsequently}, the notion of particles aggregate. In analogy
with this, in the next section we will develop a notion of
``individual quasiset'', which will be used as a base for
developing the notion of quantity of elements of a certain kind of
given quasisets. This is to be done without making appeal to
classical individuality and avoiding the introduction of
quasicardinal as a primitive concept (and avoiding
quasicardinality axioms). As in the laboratory, where the system
to be counted is submitted to a process which transforms the
original state of the system, and only from this modification is
that we talk about particle number, the axiomatic variant exposed
in the rest of this article submits some quasisets (called finite
quasisets) to a process which assigns them a quasicardinal in a
consistent manner. We want to consider quasicardinal as the result
of a process.

\section{Reformulation of Quasicardinality Axioms}\label{s:Reformulation}

In section \ref{s:QaR} we suggested that it would be interesting
to enrich $Q$ in such a way that systems with not defined particle
number, could be described with the formalism. We suggested that a
possible way to follow is to modify $Q$. To do this, we search for
a Q-like theory in which the quasicardinal is not to be taken as a
primitive concept, alike $Q$. In such a theory, to have a well
defined quasicardinal could be a property that some quasisets
possess and others do not, thus allowing the existence of
quasisets with their quantity of elements not well defined. In
section \ref{s:Individuality} we discussed the links between the
concept of particle number and the measurement process. There we
recalled that it is possible to assign a particle number in a
consistent manner to systems which are not as simply as individual
aggregates (for example, to electrons in an atom). We said that
this fact suggested that a theory in which quasicardinal is a
derived concept could be conceived. In the rest of this section we
explore this possibility.

The idea behind the formal construction shown below is very
simple. Once we have developed the intuitive idea of a quasiset
with one element (the singleton), this is used to count the
urelemente of a given quasiset extracting them, one by one,
without knowing which one is extracted at each time. The challenge
is to express this simple idea in a first order language without
identity, and using only the axioms of $Q$ which do not contain
the primitive unary functional letter $qc()$. The formal
construction shown below enriches the questions posed in
\cite{Sant'Anna} about the physical and philosophical implications
of the ``labeling process'' described there, which alike our
construction, makes use of the quasicardinality axioms. From now
on, we use the same notation as in \cite{Un estudio}, and also
results and definitions therein.

\subsection{Singletons}\label{s:S}

In $ZF$ set theory it is possible to construct, for a particular
set, another set whose only element is the given set. This is to
say: if $A$ is a set, there exists $\{A\}$, the set whose only
element is $A$. Of course, the cardinal of $\{A\}$ is $1$. Its
analogous in $Q$ is the quasiset:$$[A]=_E[z:z\equiv A]$$ the
quasiset of all the $z$ indistinguishable from $A$. $=_E$ stands
for the extensional equality defined in \cite{Un estudio}. There,
it is defined as a binary relation and it is a derived concept
(alike $\equiv$, which is a primitive one). It collapses into
classical equality when applied to elements of $ZFU$ or M-atoms,
and it is not defined for m-atoms (which are all
indistinguishable).

In the frame of $Q$ it is not necessarily true that $[A]$ has a
quasicardinal equal to $1$, for it could contain other elements
indistinguishable from $A$, (apart from $A$ itself). Our aim in
this section is to construct a singleton analogous without using
the quasicardinality axioms of quasiset theory (i.e., the axioms
listed in \cite{Un estudio} which make use of $qc()$ functional
letter). To do this, we recall the following construction
(\cite{Un estudio}): suppose that $X$ is such that
$X\neq_E\emptyset$ and $\exists x(x\in X)$. If $\wp (X)$ stands
for the collection of all subquasisets of $X$, then there exists
the quasiset of the $a\subseteq X$ such that $x\in a$:
$$A_x=_E[a\in \wp (X):x\in a]$$ and therefore $$<x>=_E\cap_{a\in
A_{x}}a$$ exists too. So $<x>$ is the intersection of $A_x$.
Taking this fact into account, it is possible to give the
following definitions:
\begin{definition}
Given $X\neq_E\emptyset$ and $x\in X$ $$A_x=_E[a\in \wp (X):x\in
a]$$
\begin{quote} \emph{}\end{quote}
\end{definition}
\begin{definition}
Given $X\neq_E\emptyset$ and $x\in X$ $$<x>=_E\cap_{a\in A_{x}}a$$
 is the singleton of x relative to X.
\end{definition}
In \cite{Un estudio}, $<x>$ is a strong singleton.  The
interesting point is that we can show that the singleton defined
above satisfies the following properties:
\begin{prop}
In case that $X$ is a set (i. e. an element of the copy of $ZFU$),
if $x\in X$ it follows that $$<x>=_E\{x\}$$
\end{prop}
\begin{proof}
Straightforward from the axioms of Q. \qed \\
\end{proof}

In the following proposition, ``$Q(X)$'' stands for ``$X$ is a
quasiset''.

\begin{prop}\label{e:LABASE}
Let $X$ be such that $Q(X)$, $x\in X$, $\alpha$ such that
$Q(\alpha)$ and $\alpha\subseteq <x>$. Then, $$\alpha=_E\emptyset
\vee \alpha=_E <x>$$
\end{prop}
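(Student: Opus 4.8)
The plan is to show that any nonempty subquasiset of $<x>$ must coincide extensionally with $<x>$ itself, exploiting the minimality of the strong singleton as an intersection. First I would record two consequences of the definition $<x>=_E\cap_{a\in A_x}a$. Since indistinguishability is reflexive, the separation subquasiset $[z\in X:z\equiv x]$ is a subquasiset of $X$ containing $x$, hence belongs to $A_x$; taking the intersection, every element of $<x>$ is indistinguishable from $x$. Second, because $<x>$ is the intersection of \emph{all} subquasisets of $X$ that contain $x$, it is the $\subseteq$-least such subquasiset; more generally, whenever $z\in X$ and $z\in a\subseteq X$ one has $<z>\subseteq a$, by applying the same construction to $z$ and noting that $a$ then belongs to $A_z$.

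Next I would handle the dichotomy itself. If $\alpha=_E\emptyset$ we are done, so assume $\alpha\neq_E\emptyset$ and pick some $y\in\alpha$. From $\alpha\subseteq<x>\subseteq X$ I obtain both $y\in X$, so that the singleton $<y>$ relative to $X$ is defined, and, by the first remark, $y\equiv x$. Since $\alpha\subseteq X$ we have $\alpha\in\wp(X)$, and $y\in\alpha$ gives $\alpha\in A_y$; intersecting $A_y$ therefore yields $<y>\subseteq\alpha$. Combined with the hypothesis $\alpha\subseteq<x>$, this leaves only the comparison between $<x>$ and $<y>$ to settle.

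The decisive step, and the one I expect to be the main obstacle, is the claim that $x\equiv y$ forces $<x>=_E<y>$. This cannot follow from membership alone, since $y\in a$ does not entail $x\in a$ and indistinguishability never collapses into co-membership; it must instead be drawn from the unobservability-of-permutations postulate of $Q$, which guarantees that interchanging indistinguishable m-atoms carries a subquasiset of $X$ containing $x$ to an extensionally indistinguishable subquasiset containing $y$, and conversely, so that $A_x$ and $A_y$ have extensionally equal intersections. Granting $<x>=_E<y>$, the chain $<x>=_E<y>\subseteq\alpha\subseteq<x>$ together with the antisymmetry of $\subseteq$ with respect to $=_E$ delivers $\alpha=_E<x>$, completing the disjunction.

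Throughout, I would be careful to check that every appeal is to an axiom of $Q$ that does \emph{not} involve the functional letter $qc()$, since the entire purpose of this reformulation is to dispense with the quasicardinality axioms. In particular, the minimality and permutation facts above should be traced back to weak extensionality and the indistinguishability postulate rather than to any counting principle; this bookkeeping, more than any single computation, is where I anticipate the subtlety to lie.
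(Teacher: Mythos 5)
Your reduction to the claim that $x\equiv y$ forces $<x>=_E<y>$ is where the argument breaks down, and the permutation postulate will not repair it. The unobservability-of-permutations result in $Q$ delivers only \emph{indistinguishability} ($\equiv$) of the quasisets obtained by exchanging indistinguishable m-atoms, never extensional equality ($=_E$); and the paper is explicit that these relations must not be conflated --- the whole design of $Q$ rests on indistinguishable m-atoms being permitted to belong to different quasisets, so $A_x$ and $A_y$ need not have extensionally equal intersections merely because $x\equiv y$. From your own setup you legitimately obtain $<y>\subseteq\alpha\subseteq<x>$, hence $<y>\subseteq<x>$; but the reverse inclusion $<x>\subseteq<y>$ is exactly as hard as the proposition itself (indeed it is a corollary of it, obtained by taking $\alpha=_E<y>$ and noting $y\in<y>$), so the decisive step is circular as written.

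The paper avoids this entirely by case-splitting on the well-posed question of whether $x\in\alpha$, and then applying your (correct) minimality observation to a different quasiset in each case. If $x\in\alpha$, then $\alpha\subseteq<x>\subseteq X$ gives $\alpha\in A_x$ directly, whence $<x>\subseteq\alpha$ and $\alpha=_E<x>$ --- no detour through the singleton of an arbitrary element is needed. If $x\notin\alpha$, then $x\in(<x>\setminus\alpha)\subseteq X$, so $(<x>\setminus\alpha)\in A_x$ and minimality gives $<x>\subseteq<x>\setminus\alpha$; any $z\in\alpha$ would then lie in $<x>$ and hence in $<x>\setminus\alpha$, contradicting $z\in\alpha$, so $\alpha=_E\emptyset$. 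Your first paragraph already isolates the only tool required (that $<x>$ is the $\subseteq$-least member of $A_x$); the misstep is choosing an arbitrary $y\in\alpha$ and then needing to identify two strong singletons of indistinguishable atoms, an identification that $Q$ deliberately does not license.
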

\begin{proof}
There are two possibilities: $x\in\alpha$ or $x\notin\alpha$.
Suppose the first holds. From $\alpha\subseteq<x>\subseteq X$, it
follows that $\alpha\in \wp (X)$ and if $x\in\alpha$ then
$\alpha\in A_{x}$. But in this case, by construction of $<x>$, it
follows that $<x>\subseteq\alpha$ and therefore $<x>=_E\alpha$. On
the other hand, if we suppose that $x\notin\alpha$, then:
$$x\in(<x>\setminus\alpha)\subseteq<x>\subseteq X
\longrightarrow(<x>\setminus\alpha)\in
A_{x}\longrightarrow<x>\subseteq(<x>\setminus\alpha)$$ So we have
$<x>=_E(<x>\setminus\alpha)$. But if this were true, suppose that
$z$ is such that $z\in\alpha$, then $z\in<x>$ and therefore $z$
would belong to $(<x>\setminus\alpha)$ and thus
$\neg(z\in\alpha)$, which is a contradiction. For this reason it
follows that $\forall z(z\notin\alpha)$ but this is to say that
$\alpha=_E\emptyset$. \qed
\\
\end{proof}
\\The last property states that $<x>$
only admits $\emptyset$ and $<x>$ itself as subquasisets. This is
a suitable property for a quasiset of quasicardinal $1$. But the
last assertion has to be taken intuitively for, at this step, we
have not given a rigorous definition of quasicardinal.
Notwithstanding, it is reasonable to interpret singletons as
quasisets with only one element, because the last two properties
suggest that they are the natural extension of the $ZF$ singleton.
In the following sections we will use these intuitive ideas to
define the quasicardinal, without making use of the axioms related
to the $qc()$ functional letter.

\subsection{Descendant Chains}

Below, we will make use of the following proposition: (were
$X\setminus<y>$ stands for the difference between the quasiset $X$
and the singleton $<y>$)

\begin{prop}\label{e:ddx}
$$\forall X(Q(X)\wedge X\neq_E\emptyset\longrightarrow\exists
y(y\in X\wedge X\setminus<y>\subset X))$$
\end{prop}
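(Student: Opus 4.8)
The plan is to produce an explicit witness $y$ and then verify that $X\setminus<y>$ is a proper subquasiset of $X$. First I would use the hypothesis $X\neq_E\emptyset$ to obtain some $y$ with $y\in X$; for this $y$ the strong singleton $<y>$ is defined by the construction of Section \ref{s:S}, and by that construction one has both $y\in<y>$ and $<y>\subseteq X$. The containment $X\setminus<y>\subseteq X$ is then immediate from the definition of the difference operation, so the whole weight of the proof falls on showing the inclusion is \emph{proper}, i.e. $X\setminus<y>\neq_E X$.

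For properness I would argue by contradiction, and deliberately avoid any element-by-element reasoning about $y$ itself --- since $y$ is (in general) an m-atom, assertions of the form ``$y\notin X\setminus<y>$'' are awkward to handle directly, and it is cleaner to work with the quasiset operations. Assume $X\setminus<y>=_E X$. Intersecting both sides with $<y>$ gives $<y>\cap(X\setminus<y>)=_E<y>\cap X$. The left-hand side is empty by the basic disjointness of a quasiset and its relative complement, $<y>\cap(X\setminus<y>)=_E\emptyset$, while the right-hand side collapses to $<y>$ because $<y>\subseteq X$ forces $<y>\cap X=_E<y>$. Hence $<y>=_E\emptyset$, contradicting $y\in<y>$. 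This contradiction establishes $X\setminus<y>\neq_E X$, and together with the trivial containment yields $X\setminus<y>\subset X$, completing the argument.

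The main obstacle I anticipate is purely at the level of bookkeeping the non-$qc$ axioms: every operation I invoke --- existence of $<y>$, the difference $X\setminus<y>$, the disjointness $<y>\cap(X\setminus<y>)=_E\emptyset$, and the absorption $<y>\cap X=_E<y>$ under $<y>\subseteq X$ --- must be grounded in the fragment of $Q$ that does not use the $qc()$ functional letter, since the point of the whole construction is to dispense with the quasicardinality axioms. These are all standard Boolean-style identities inherited from the $ZFU$ copy and the separation and intersection axioms of $Q$, so I expect no genuine difficulty, only care. A secondary point worth checking is the very first step: one should confirm that $X\neq_E\emptyset$ does deliver an actual witness $y\in X$ in $Q$ (equivalently that $X=_E\emptyset$ is equivalent to having no elements), rather than merely distinguishing $X$ from the empty quasiset in some weaker extensional sense.
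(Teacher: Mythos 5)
Your proof is correct, and its overall skeleton coincides with the paper's: extract a witness $y\in X$ from $X\neq_E\emptyset$, form $<y>$, observe $X\setminus<y>\subseteq X$, and then establish properness. The one place you genuinely diverge is the properness step. The paper simply notes $y\in X\wedge\neg(y\in X\setminus<y>)$, which immediately witnesses $X\setminus<y>\subset X$; you instead avoid element-level talk about $y$ and run a contradiction through the Boolean identities $<y>\cap(X\setminus<y>)=_E\emptyset$ and $<y>\cap X=_E<y>$, concluding $<y>=_E\emptyset$ against $y\in<y>$. Your caution is understandable but not necessary: in $Q$ the membership relation $\in$ is primitive and applies to m-atoms, so ``$y\notin X\setminus<y>$'' is a perfectly well-formed formula (it is only \emph{identity} statements like $y=z$ that are outlawed for m-atoms), and since $X\setminus<y>$ is obtained by separation on the condition $z\notin<y>$, the paper's one-line argument is legitimate and shorter. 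Your Boolean detour buys a proof that never mentions membership of the particular urelement in the complement, at the price of invoking substitutivity of $=_E$ under intersection; both are sound in the $qc$-free fragment. Finally, the preliminary point you flag --- that $X\neq_E\emptyset$ really yields some $y\in X$ --- is exactly the first half of the paper's proof, argued by contraposition: if no $y$ belonged to $X$ then $\forall z(z\in X\leftrightarrow z\in\emptyset)$, whence $X=_E\emptyset$; so you were right to single it out as the step needing explicit justification.
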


\begin{proof}
Let us see first that $X\neq_E\emptyset\longrightarrow\exists y
\in X$. If this were not true, then $\forall y$ we would have
$\neg(y\in X)$. But in this case it is easy to convince oneself
that $\forall z(z\in X\longleftrightarrow z\in\emptyset)$ and so
$X=_E \emptyset$.\\ Using that $\exists y( y\in X)$ it is possible
to construct $<y>$. On the other hand $X\setminus<y>\subseteq X$
by definition of inclusion. But in this case, $y\in X \wedge
\neg(y\in X\setminus<y>)$ , so we have $X\setminus<y>\subset X$.
\qed
\\
\end{proof}

Taking into account the last proposition, given a non empty
quasiset $X$, we can imagine the following process. Let us define
first the notion of ``direct descendent of $X$''. If
$X\neq\emptyset$ then $\exists z\exists_Q Y (z \in X \wedge Y=_E
(X\setminus<y>)\wedge Y\subset X)$ and a quasiset $Y$ with this
property will be called a direct descendent of $X$. In the last
phrase, ``$\exists_Q Y$'' stands for ``there exists a quasiset
$Y$''.

\begin{definition}

 $$DD_X(Y)\longleftrightarrow\exists z ( z \in X \wedge Y=_E X
 \setminus <z> )$$

$DD_X(Y)$ is read as: Y is a direct descendent of X.
\end{definition}

So, given $X\neq\emptyset$ (by proposition \ref{e:ddx}) there
always exists a direct descendent of $X$, let us call it $X^-$,
which satisfies:
             $$X \supset (X^-)$$
It could happen that $X^-=\emptyset$ or not. If
$X^-\neq_E\emptyset$ it follows that there exists a direct
descendent of $X^-$, call it $X^{--}$. Then we have: $$X \supset
X^{-} \supset X^{--} $$

Going ahead with this process, it could be the case that this
chain of inclusions stopped (in case the last quasiset so obtained
be the empty quasiset), or that it has no end. So we could
conceive two qualitatively different situations:

Situation $1$: $$X \supset X^{-} \supset X^{--} \supset X^{---}
\supset \cdots \cdots$$ (the inclusions chain continues
indefinitely)

Situation $2$: $$X \supset X^{-} \supset X^{--} \supset X^{---}
\supset \cdots \cdots \supset \emptyset$$ (the inclusion chain
ends in the empty quasiset).

It is not clear for us how to assure the existence of these chains
without making appeal to the quasicardinality axioms. But in the
following we will assume that given any arbitrary quasiset, we can
always follow this procedure taking into account the example of
the Helium atom given in section \ref{s:Individuality}. Suppose we
want to describe the collection of the electrons in the atom as a
pure quasiset $X$. Then, when the atom is ionized, the free
electron track and the electrons in the ion could be interpreted
as a singleton $<x>$ extracted to $X$, and a quasiset $ X^{-}=_E
X\setminus<x>$ respectively. If the atom is ionized again, we
would obtain another singleton $<y>$ extracted to $X^{-}$ and
$X^{--}=_E \emptyset$, with analogous interpretations. Then we
obtain the chain of inclusions $X\supset X^{-} \supset \emptyset$
(as in situation $2$ described above) which reaches the empty
quasiset in two steps, expressing the fact that an Helium atom has
two electrons. In the following, we will postulate the existence
of these chains. To express these ideas in formal terms, it is
necessary to translate them to first order language using the
axiomatic of $Q$. It is important to remark that in this work, we
do not make appeal to the axioms of quasicardinality alike
Sant'Anna in \cite{Sant'Anna}. With this aim we make the following
definition:

\begin{definition}\label{e:cdx}
 $$CD_X(\gamma)\longleftrightarrow$$ $$( \gamma \in \wp (\wp
 (X)) \wedge X \in\gamma  \wedge \forall z \forall y (z\in \gamma
 \wedge y\in \gamma\wedge z\neq_{E} y\longrightarrow(z\supset y
 \vee y\supset z))$$
 $$\wedge\forall z(z\in \gamma\wedge
 z\neq_{E}\emptyset\longrightarrow\exists y(y\in\gamma\wedge
 DD_z(y)\wedge\forall w(w\in\gamma\wedge DD_z(w)\longrightarrow
 w=_E y))))$$\
$CD_X(\gamma)$ is read as: $\gamma$ is a descendant chain of $X$.
\end{definition}
Definition \ref{e:ddx} says that $\gamma$ is a descendant chain of
$X$ if and only if it is a collection of enclosed subquasisets of
$X$ (each one is included in its predecessor) such that if
$X\in\gamma$, $z\in\gamma$ and $z\neq\emptyset$, then there exists
a direct descendant of $z$ which belongs to $\gamma$. The
definition of descendant chain was made to express in the language
of $Q$ (minus quasicardinality axioms) situations $1$ and $2$
considered above. At the end of this section we will discuss that
this concept could also be used to describe another situations.

Using this definition we will introduce the following postulate,
which asserts that for any given quasiset there exists at least
one descendant chain.

\begin{axiom}\label{a:H1}
(Axiom of descendent chains) $$\forall_Q X
(X\neq_E\emptyset\longrightarrow\exists \gamma (CD_X(\gamma)))$$
\end{axiom}

This postulate will be used below to construct the quasicardinal
as a derived concept.

\subsection{Finite Quasisets}\label{s:QFIN}

In this section, we will introduce the notion of finite quasisets.
It is based on the notion of descendant chain given above and
involves quasifunctions \cite{Un estudio}. We will use the
notation $qf(F)$ to express that ``$F$ is a quasifunction''. A
quasifunction is a collection of ordered pairs, and if the pair
$<x;y>$ belongs to $F$, we will write $F(x)=_E y$ whenever $y$ is
itself a quasiset. Note that this notation has no sense if $y$ is
an $m-atom$, because extensional equality is not defined for such
Urelemente. In the following definition, $\omega$ is the quasiset
which contains $\emptyset$ and all its successors \cite{Halmos}.
If $n^{+}$ is a successor of $n$, then $n^{+}=_E n\cup \{n\}$.

\begin{definition}\label{e:FIN}
Given $X\neq_E \emptyset$: $$Fin(X)\longleftrightarrow$$ $$\exists
n(n\in\omega\wedge\forall\gamma(CD_X(\gamma)\longrightarrow\exists
F(F\subseteq \gamma\times n^{+}\wedge qf(F)\wedge  $$ $$<n;X>\in
F\wedge\forall z(z\in\gamma\longrightarrow\exists j(j\in
n^{+}\wedge< j;z>\in F))
 \wedge$$ $$\forall j(j\in
n^{+}\wedge j\neq_E 0\longrightarrow DD_{F(j)}(F(j-1)))))$$

$Fin(X)$ means: X is finite.
\end{definition}

The intuitive interpretation of this definition is that a quasiset
will be considered to be finite if and only if all its descendant
chains last (reach the empty set) in a finite series of steps. In
the following, we will make use of another axiom:

\begin{axiom}\label{a:H2}
$$\forall_Q X\forall_Q Y(Fin(X)\wedge Fin(Y)\wedge Y\equiv
X\longrightarrow$$ $$((X\subseteq Y\longrightarrow Y=_E X)\wedge
(Y\subseteq X\longrightarrow Y=_E X)))$$
\end{axiom}
This postulate means that if a given quasiset is finite, any of
its direct descendants is distinguishable from the original
quasiset. This is a reasonable assumption, because the only thing
that distinguishes two collections of truly indistinguishable
objects of the same class should be their quantity of elements. We
will define the quasicardinal to fit this intuitive idea. We will
need the following proposition:

\begin{prop}\label{e:FDE}
Let $X$ be a quasiset satisfying $X\neq_E\emptyset$, $Fin(X)$ y
$CD_X(\gamma)$. Let $F$ be a quasifuntion as in \ref{e:FIN}. Then
$F(0)=_E\emptyset$
\end{prop}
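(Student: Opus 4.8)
The plan is to show that $F(0)$ is the $\subseteq$-minimum element of the chain $\gamma$ and then to invoke the descendant-chain structure of $\gamma$ to force this minimum to be empty. First I would unpack the properties that Definition \ref{e:FIN} imposes on $F$: it is a quasifunction defined on $n^{+}=\{0,1,\dots,n\}$ with values in $\gamma$, it satisfies $F(n)=_E X$, it is onto $\gamma$ (every $z\in\gamma$ equals some $F(j)$), and for every $j\in n^{+}$ with $j\neq_E 0$ one has $DD_{F(j)}(F(j-1))$.

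The first substantive step is to establish strict monotonicity of $F$ along the indices. For each $j\neq_E 0$, the hypothesis $DD_{F(j)}(F(j-1))$ means, by the definition of the direct-descendant relation, that $F(j-1)=_E F(j)\setminus<z>$ for some $z\in F(j)$; arguing exactly as in Proposition \ref{e:ddx} this yields the strict inclusion $F(j-1)\subset F(j)$. Chaining these inclusions and using transitivity of $\subseteq$ gives $F(0)\subseteq F(j)$ for every $j\in n^{+}$. Since $F$ is onto $\gamma$, every element of $\gamma$ is some $F(j)$, so $F(0)\subseteq z$ holds for all $z\in\gamma$; that is, $F(0)$ is the least element of the chain.

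The conclusion then follows by contradiction. Suppose $F(0)\neq_E\emptyset$. Because $F(0)\in\gamma$ and is non-empty, the last clause of $CD_X(\gamma)$ in Definition \ref{e:cdx} provides a $y\in\gamma$ with $DD_{F(0)}(y)$, hence $y\subset F(0)$, so in particular $y\neq_E F(0)$. But minimality gives $F(0)\subseteq y$, and together with $y\subseteq F(0)$ the antisymmetry of $\subseteq$ forces $y=_E F(0)$, contradicting $y\neq_E F(0)$. Therefore $F(0)=_E\emptyset$.

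I expect the main obstacle to be the monotonicity step: carefully justifying that each direct-descendant passage is a \emph{strict} inclusion and that the index bookkeeping (each $j-1$ again lying in $n^{+}$, and $F$ being total there) is sound without appealing to the quasicardinality axioms. Once $F(0)$ is pinned down as the bottom of the chain, the descendant-chain condition does the rest almost immediately, since a non-empty bottom element would be obliged to have a strictly smaller direct descendant inside $\gamma$, which is absurd.
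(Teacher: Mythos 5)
Your proof is correct and follows essentially the same route as the paper's: assume $F(0)=_E\lambda\neq_E\emptyset$, use the descendant-chain condition to produce a direct descendant of $\lambda$ inside $\gamma$, locate it in the range of $F$, and derive a contradiction from the monotone structure of $F$ along the indices. The only organizational difference is that the paper splits on whether that descendant's index is $0$ (where it invokes Axiom \ref{a:H2} to conclude the two values at $0$ are distinguishable, so $F$ would fail to be a quasifunction) or positive, whereas you fold both cases into a single minimality-plus-antisymmetry argument, which quietly avoids the appeal to Axiom \ref{a:H2}.
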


\begin{proof}
As $F$ is a q-function it follows that there exists
$\lambda\in\gamma$ such that $<0;\lambda>\in F$. If $\lambda$ were
not the empty quasiset then there would exist a direct descendant
of $\lambda$, call it $\lambda^{-}$, and this last would pertain
to $\gamma$, for $DD_X(\gamma)$. By hypothesis, there exists $j\in
n^{+}$ which satisfies $<j;\lambda^{-}>\in F$. If $j=0$ then
$<0;\lambda^{-}>\in F\wedge<0;\lambda>\in F$. But
$\lambda^{-}\subset\lambda$  and so it follows that
$\neg(\lambda^{-}\equiv\lambda)$ (using axiom \ref{a:H2}) and so
$F$ would not be a quasifunction any more. Thereafter $j\neq 0$,
and in that case, $j>0$. So $<j;\lambda^{-}>\in F$ and
$<0;\lambda>\in F$ by construction of $F$, as $j>0$ it follows
that $\lambda\subset\lambda^{-}$. But this is a contradiction. The
contradiction comes from the supposition that
$\lambda\neq_E\emptyset$ and then, we have $<0;\emptyset>\in F$.
(This is the same that $F(0)=\emptyset$.) \qed
\\
\end{proof}

\begin{prop}
Let X be a non empty and finite quasiset. Then there exists a
unique $n$ which satisfies definition \ref{e:FIN}.
\end{prop}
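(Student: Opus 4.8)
The plan is to reduce the uniqueness of $n$ to the classical fact that two finite von Neumann naturals admitting a bijection between them must coincide, carrying out the whole argument on a single fixed descendant chain of $X$. First I would use $X\neq_E\emptyset$ to invoke Axiom \ref{a:H1}, which furnishes a descendant chain $\gamma$ with $CD_X(\gamma)$; fix such a $\gamma$. Suppose now that both $n$ and $m$ satisfy the matrix of Definition \ref{e:FIN}. Instantiating the clause $\forall\gamma(CD_X(\gamma)\longrightarrow\cdots)$ at our fixed $\gamma$ yields two quasifunctions: $F\subseteq\gamma\times n^{+}$ associated with $n$, and $G\subseteq\gamma\times m^{+}$ associated with $m$.

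Second, I would show that each of $F$ and $G$ is an order-preserving bijection from its index set onto $(\gamma,\subset)$. For $F$, starting from $<n;X>\in F$ and iterating the clause $DD_{F(j)}(F(j-1))$ downward shows that $F$ is total on $n^{+}$ and traces the strictly decreasing chain $X=_E F(n)\supset F(n-1)\supset\cdots\supset F(0)$, with $F(0)=_E\emptyset$ by Proposition \ref{e:FDE}. Since every step is a strict inclusion, the values $F(j)$ are pairwise $\neq_E$, so $F$ is injective and strictly monotone; and the surjectivity clause $\forall z(z\in\gamma\longrightarrow\exists j(\ldots))$ of Definition \ref{e:FIN} shows that $F$ ranges over all of $\gamma$. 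The same reasoning applies to $G$. This is the step in which Axiom \ref{a:H2} is really used, through Proposition \ref{e:FDE}, to guarantee that a direct descendant is genuinely distinguishable from its parent, so the chain cannot stall.

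Finally, I would compose the two bijections. Setting $h=G^{-1}\circ F:n^{+}\longrightarrow m^{+}$ is legitimate because every member of $\gamma$ is a quasiset rather than an $m$-atom, so $=_E$ is available throughout $\gamma$ and $G^{-1}$ is well defined. As $F$ and $G$ are strictly increasing, so is $h$; moreover $h(0)=_E 0$ and $h(n)=_E m$, since both $F$ and $G$ send the least index to $\emptyset$ and the greatest to $X$. A strictly increasing bijection of $\{0,\ldots,n\}$ onto $\{0,\ldots,m\}$ must be the identity on the common initial segment, whence $n=_E m$; this concluding step concerns only elements of $\omega$ and therefore holds already in the $ZFU$ copy contained in $Q$.

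The hard part will not be the counting itself but the bookkeeping required to treat $F$ and $G$ as genuine bijections: I must verify that they are total on $n^{+}$ and $m^{+}$ (via the downward iteration of the $DD$ clause together with Proposition \ref{e:FDE}), that the quasifunction $G$ can be inverted and composed inside $Q$, and that all the identifications involved are governed by $=_E$ on quasisets rather than by the merely primitive $\equiv$ on bare $m$-atoms. Once these points are settled, the passage to the classical statement about naturals is routine.
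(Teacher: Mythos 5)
Your argument is correct and takes essentially the same route as the paper's own proof: fix a single descendant chain $\gamma$, anchor both quasifunctions at $X$ at the top and (via Proposition \ref{e:FDE}, hence Axiom \ref{a:H2}) at $\emptyset$ at the bottom, and use the rigidity of the chain --- linear ordering plus uniqueness of direct descendants --- to force the two enumerations to agree index by index. The only divergence is in the final step, where the paper descends in lockstep from $F(n)=_E X=_E F'(m)$ and extracts the absurdity $\emptyset\subset\emptyset$ when $m>n$, whereas you package the same information as an order isomorphism $n^{+}\cong\gamma\cong m^{+}$; the content is identical.
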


\begin{proof} Let us suppose that there are two integers $n$ and $m$, $n<m$,
satisfying definition \ref{e:FIN}. In this case, given a
descendant chain $\gamma$, there exist $F$ and $F'$  which
satisfy:

$$(a)F(n)=_E X ,(b) F\subseteq\gamma\times n^{+},(c) \forall
j(DD_{F(j)}(F(j-1)))$$ $$(a')F'(m)=_E X, (b')
F'\subseteq\gamma\times m^{+},(c') \forall j(
DD_{F'(j)}(F'(j-1)))$$ Moreover, by proposition \ref{e:FDE} we
have that $F(0)=_E \emptyset$ and $F'(0)=_E \emptyset$ and by
hypothesis $m-n>0$. On the other hand, $$F(n)=_E X=_E
F'(m)\longrightarrow F(n-1)=_E X^{-}=_E F'(m-1)$$ This last is
true because there is a single direct descendant of $X$ in
$\gamma$ and because of the characteristics of quasifunctions. Due
to the fact that this is a process with a finite number of steps,
we can continue extracting until $0$ is reached, and for that
reason, we can arrange the $F(j)$s in a table:

$$F(n)=_E X=_E F'(m)$$ $$F(n-1)=_E X^{-}=_E F'(m-1)$$ $$\vdots$$
$$F(0)=_E F(n-n)=_E\emptyset=_E F'(m-n)$$ But $j=m-n\neq0$ implies
$F'(j)=_E\emptyset$ with $j\neq 0$ and $F'(0)=_E\emptyset$ (by
proposition \ref{e:FDE}). This is absurd because it would imply
$\emptyset\subset\emptyset$, and we know that $\emptyset$ cannot
be a proper subquasiset of $\emptyset$. The contradiction comes
from the supposition that $m-n>0$ and so $m=n$. \qed
\\
\end{proof}

The last proposition justifies the following definition:

\begin{definition}\label{d:DEFFIN}
If $X\neq_E \emptyset$ and $X$ is finite, we say that its
quasicardinal is the only natural number $n$ according to
definition \ref{e:FIN}. Then we write $qcard(X)=_En$. If $X$ is
the empty quasiset, we say that it is finite too and its
quasicardinal is zero.
\end{definition}

It is important at this point to make the following remark. In the
last definition we used the symbol $qcard()$ to express that the
quasicardinal of $X$ is $n$. Note the distinction of this derived
unary function and the primitive unary function $qc()$ used in
\cite{Un estudio}. In the next subsection, we show that $qcard()$
has the same properties as $qc()$ when restricted to finite
quasisets.

\subsection{Quasicardinality Theorems}

Now, we are ready to prove as theorems all the axioms of
quasicardinality of $Q$ for the case of finite quasisets. In the
rest of this section, we will assume that all quasisets are
finite, unless the contrary is mentioned.

\begin{theo}
$$\forall_Q X(X\neq_E \emptyset\longrightarrow qcard(X)\neq_E
\emptyset)$$
\end{theo}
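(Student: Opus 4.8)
The plan is to unwind Definition \ref{d:DEFFIN} and reduce the claim to a one-line contradiction powered by Proposition \ref{e:FDE}. First I would note that, under the standing convention of this subsection that every quasiset is finite, the hypothesis $X\neq_E\emptyset$ together with Definition \ref{d:DEFFIN} gives $qcard(X)=_E n$, where $n\in\omega$ is the unique natural number supplied by Definition \ref{e:FIN}. Since in the construction of $\omega$ the numeral $0$ is precisely $\emptyset$, proving $qcard(X)\neq_E\emptyset$ is the same as proving $n\neq_E 0$.

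Next I would bring in the apparatus already established. By the Axiom of descendent chains (Axiom \ref{a:H1}), from $X\neq_E\emptyset$ there exists a descendant chain $\gamma$ with $CD_X(\gamma)$. Feeding this particular $\gamma$ into the quantifier structure of Definition \ref{e:FIN} yields a quasifunction $F\subseteq\gamma\times n^{+}$ with $qf(F)$, satisfying $<n;X>\in F$ (so $F(n)=_E X$) and the descent clause $DD_{F(j)}(F(j-1))$ for every nonzero $j\in n^{+}$. The key external input is Proposition \ref{e:FDE}, which tells us that for any such $F$ one has $F(0)=_E\emptyset$.

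The decisive step is then immediate. Suppose, toward a contradiction, that $n=_E 0$. Then $n^{+}=_E 0^{+}=_E\{0\}$, so the only admissible index is $0$, and the clause $<n;X>\in F$ reads $<0;X>\in F$, that is, $F(0)=_E X$. Combining this with $F(0)=_E\emptyset$ from Proposition \ref{e:FDE}, and using that $qf(F)$ forces $F$ to assign a single value at the argument $0$, transitivity of $=_E$ delivers $X=_E\emptyset$, contradicting the hypothesis $X\neq_E\emptyset$. Hence $n\neq_E 0$, which is exactly $qcard(X)\neq_E\emptyset$.

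I expect no serious obstacle here; the entire content is already packaged in Proposition \ref{e:FDE}, and the theorem is essentially its contrapositive read through Definition \ref{d:DEFFIN}. The only points requiring care are the bookkeeping identification $0=_E\emptyset$ (so that ``quasicardinal nonzero'' and ``quasicardinal not the empty quasiset'' coincide) and the verification that, when $n=_E 0$, the single index available is indeed $0$, so that $F(0)$ can legitimately be read off in two ways. Both are routine, making this the easiest of the quasicardinality theorems to transcribe.
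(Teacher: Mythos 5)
Your proof is correct and follows essentially the same route as the paper's: both arguments come down to the observation that a descendant chain of a non-empty $X$ must contain the two extensionally distinct members $X$ and $\emptyset$, so the indexing quasifunction cannot live on $0^{+}=\{0\}$ and hence $n\geq 1$. Your version merely makes this explicit as a contradiction via Proposition \ref{e:FDE} and the single-valuedness of $F$, which is a slightly more formal rendering of the paper's one-line counting argument.
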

\begin{proof}

As $X\neq_E\emptyset$, it follows that there exists at least one
direct descendant of $X$, and each descendant chain has at least
two elements (they surely contain $X$ and $\emptyset$). So the
quasifunctions $F$ are defined starting from $0^{+}$ and then,
$qcard(X)\neq_E 0$. \qed
\\
\end{proof}

\begin{theo}
$$\forall_Q
X(qcard(X)=_E\alpha\longrightarrow\forall\beta(\beta\leq\alpha\longrightarrow\exists_Q
Y(Y\subseteq X\wedge qcard(Y)=_E\beta)))$$
\end{theo}

\begin{proof}
Suppose that $qcard(X)=_E n$ y $m<n$. Let $\gamma$ be a descendant
chain of X. So, there exists $F$ such that $F\subseteq\gamma\times
n^{+}$, $F(n)=_E X$ y $DD_{F(j)}(F(j-1))$. Furthermore,
$F(0)=_E\emptyset$.\\ Consider $Y=_E F(m)$. Then $Y\in\gamma$. It
is easy to convince oneself that the union of each one of the
descendant chains of Y with the sets that precede Y in $\gamma$
are descendant chains of $X$. Then, given an arbitrary descendant
chain in Y, let us extend it to a descendant chain in X and
restrict the quasifunction corresponding to the last one
(according to definition \ref{e:FIN}) to the quasiset  which
contains $Y$ and its descendants. Then, it follows that $Y$ has
quasicardinal $m$. \qed
\\
\end{proof}

\begin{theo}
$$\forall_Q X\forall_Q Y(Fin(X)\wedge X\subset Y\longrightarrow
qcard(X)<qcard(Y))$$
\end{theo}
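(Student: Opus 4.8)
The plan is to prove the strict inequality by producing a single descendant chain of $Y$ that passes through $X$, and then reading off $qcard(X)$ and $qcard(Y)$ as the positions of $X$ and of $Y$ along that chain. Since we are working under the standing assumption that every quasiset is finite, $Y$ is finite and $qcard(Y)$ is defined; write $n=qcard(Y)$. Recall moreover that, by the uniqueness proposition just proved together with Definition \ref{e:FIN}, every descendant chain of $Y$ carries a quasifunction $F$ with $F(n)=_E Y$ and $F(0)=_E\emptyset$, so every such chain has exactly $n+1$ members.

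The core step is the following lemma: if $X\subseteq Z\subseteq Y$ and $X\subset Z$ properly, then $Z$ has a direct descendant $Z^{-}$ still containing $X$. To establish it I would first observe that $X\subset Z$ forces $Z\setminus X\neq_E\emptyset$ (otherwise $Z\subseteq X$, and with $X\subseteq Z$ weak extensionality gives $X=_E Z$), so there is a $z$ with $z\in Z\setminus X$. Forming the singleton of $z$ relative to $Z$, i.e. $A_z=_E[a\in\wp(Z):z\in a]$ and $<z>=_E\cap_{a\in A_z}a$, I would note that $Z\setminus X$ is itself a member of $A_z$ (it lies in $\wp(Z)$ and contains $z$), whence $<z>\subseteq Z\setminus X$ by the intersection construction. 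Thus $<z>$ is disjoint from $X$, and since $X\subseteq Z$ every $w\in X$ satisfies $w\in Z$ and $w\notin<z>$, i.e. $X\subseteq Z\setminus<z>=_E Z^{-}$; and $Z^{-}\subset Z$ is a direct descendant because $z\in<z>$ forces $z\notin Z^{-}$ while $z\in Z$.

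Next I would iterate the lemma from $Z_0=_E Y$: as long as $X\subset Z_i$ is proper it yields $Z_{i+1}=_E Z_i^{-}$ with $X\subseteq Z_{i+1}\subset Z_i$. Any such strictly descending sequence of direct descendants can be completed to a descendant chain of $Y$ (extend its last quasiset down to $\emptyset$ using Axiom \ref{a:H1} and finiteness), so it has at most $n+1$ members; hence the iteration must halt, and it can only halt when $X=_E Z_k$. Concatenating $Y=_E Z_0\supset\cdots\supset Z_k=_E X$ with a descendant chain of $X$ running from $X$ down to $\emptyset$ (which exists since $Fin(X)$) produces $\gamma$ with $CD_Y(\gamma)$.

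Finally I would apply $Fin(Y)$ to $\gamma$ to obtain the quasifunction $F$ with $F(n)=_E Y$, $F(0)=_E\emptyset$ and $DD_{F(j)}(F(j-1))$ for all $j$. As $X\in\gamma$ we have $X=_E F(m)$ for some $m\leq n$, and restricting $F$ to positions $0,\dots,m$ gives a quasifunction witnessing $X$ finite along its own descendant chain with top value $X$, so by the uniqueness proposition $qcard(X)=_E m$. Because the chain strictly decreases and $X\subset Y$ is proper, $X$ cannot sit at the top position, so $m\neq n$ and therefore $m<n$, giving $qcard(X)=m<n=qcard(Y)$. The main obstacle is the core lemma, and inside it the disjointness $<z>\cap X=_E\emptyset$: this is exactly the place where indistinguishability threatens the argument, and it is defused by using $Z\setminus X$ itself as a separating subquasiset in the intersection that defines $<z>$. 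A secondary subtlety, which must be handled without the quasicardinality axioms, is bounding the length of the iteration, for which the observation that all descendant chains of $Y$ have $n+1$ members is what I would rely on.
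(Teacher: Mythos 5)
Your proof is correct, but it follows a genuinely different route from the paper's. The paper argues by contradiction: assuming $qcard(X)\geq qcard(Y)$, it takes a descendant chain of $X$ and lifts it into $Y$ via the identity $Y\setminus<z>=_E(Y\setminus X)\cup(X\setminus<z>)$ for $z\in X$, so that the lifted chain has members of the form $(Y\setminus X)\cup X^{-}$, $(Y\setminus X)\cup X^{--}$, and after $qcard(X)$ steps lands on $(Y\setminus X)\cup\emptyset=_E Y\setminus X$; this is empty only if $X=_E Y$, and otherwise forces a descending sequence in $Y$ longer than $qcard(Y)$ permits. You instead build a single descendant chain of $Y$ that passes through $X$ --- peeling off $Y\setminus X$ first, then descending inside $X$ --- and read both quasicardinals as positions on that one chain. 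The two mechanisms are mirror images: the paper implicitly needs that the singleton of $z\in X$ taken relative to $Y$ stays inside $X$, while your core lemma is that the singleton of $z\in Z\setminus X$ stays inside $Z\setminus X$; both facts follow from the same observation that $<z>$, being the intersection over $A_z$, is contained in every subquasiset of the ambient quasiset that contains $z$ (you make this explicit, the paper does not). What your route buys is a cleaner endgame --- no contradiction and no case split between reaching $\emptyset$ in exactly $n$ steps versus more than $n$ steps --- at the price of having to justify termination of the peeling iteration, which you do by bounding it with the uniform length $n+1$ of descendant chains of $Y$, whereas the paper gets termination for free from $Fin(X)$. Both arguments lean on the same unproved-but-plausible step of extending or concatenating partial descending sequences into genuine descendant chains (the paper uses the analogous claim in its proof of the second quasicardinality theorem), so neither is more rigorous than the other on that point; your appeal to Axiom \ref{a:H2} to rule out $m=n$ is the same device the paper uses in Proposition \ref{e:FDE}.
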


\begin{proof}
Suppose that $qcard(X)\geq n=_E qcard(Y)$. It is clear that:
$$Y=_E(Y\setminus X)\cup X$$ If $X\neq_E\emptyset$ it follows that
there exists $z$ such that $z\in X$.
Then:$$Y\setminus<z>=_E(Y\setminus X)\cup(X\setminus<z>)$$ or with
the usual notation:$$Y^{-} =_E (Y\setminus X) \cup (X^{-})$$
Taking this into account, it is apparent that the descendant chain
in $X$ induces a descendant chain in $Y$:

$$X\supset X^{-}\supset
X^{--}\supset\cdots\cdots\supset\emptyset\longrightarrow$$
$$\longrightarrow Y\supset Y^{-}\supset
Y^{--}\supset\cdots\cdots\cdots\cdots$$\\ with $Y=_E(Y\setminus
X)\cup X$, $Y^{-}=_E(Y\setminus X)\cup (X^{-})$,
$Y^{--}=_E(Y\setminus X)\cup (X^{--})$, etc.\\ If the empty set is
reached in $m=_E n$ steps, then we obtain:
$$\emptyset=_E(Y\setminus X)\cup\emptyset\longrightarrow
Y\setminus X=_E\emptyset\longrightarrow Y=_E X$$ but this has no
sense. If the empty quasiset is reached in $m>n$ steps, it is
equally absurd, because we would arrive at zero in a number of
steps greater than that needed for any chain of $Y$. So we see
that $X$ must be finite too, for if it were not so, it would have
a chain in $Y$ with an associated number greater than $n$ . \qed
\\
\end{proof}

\begin{theo}\label{e:SUM}
$$\forall_Q X\forall_Q Y(\forall w(w\notin X\vee w\notin
Y)\longrightarrow$$ $$(qcard(X\cup Y)=_E qcard(X)+ qcard(Y))$$
\end{theo}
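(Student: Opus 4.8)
The plan is to compute $qcard(X\cup Y)$ by exhibiting a single, explicitly constructed descendant chain of $X\cup Y$ whose length is $qcard(X)+qcard(Y)$, and then to invoke the uniqueness of the counting number (the proposition preceding Definition \ref{d:DEFFIN}) to conclude that this length is $qcard(X\cup Y)$. Assuming as usual that all quasisets in play are finite, I would first dispose of the degenerate cases: if $X=_E\emptyset$ then $X\cup Y=_E Y$ and the identity reads $qcard(Y)=_E 0+qcard(Y)$, and symmetrically for $Y=_E\emptyset$. So assume both are non empty, write $n=_E qcard(X)$ and $m=_E qcard(Y)$, and fix descendant chains $X\supset X^-\supset\cdots\supset\emptyset$ and $Y\supset Y^-\supset\cdots\supset\emptyset$ of lengths $n$ and $m$ respectively, which exist by Axiom \ref{a:H1}.

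The technical heart is a commutation lemma: if $X'\subseteq X$ and $z\in X'$, then $(X'\cup Y)\setminus<z>=_E(X'\setminus<z>)\cup Y$. Two points must be checked. First, the singleton $<z>$ computed relative to $X'\cup Y$ coincides with the one computed relative to $X'$: since $X'$ itself is a subquasiset of $X'\cup Y$ containing $z$, the intersection defining $<z>$ relative to $X'\cup Y$ is contained in $X'$, hence contained in the singleton relative to $X'$; as the latter admits only $\emptyset$ and itself as subquasisets (Proposition \ref{e:LABASE}) and the former is non empty, the two are extensionally equal. Second, because the disjointness hypothesis $\forall w(w\notin X\vee w\notin Y)$ forces $<z>\subseteq X'\subseteq X$ to be disjoint from $Y$, removing $<z>$ from $X'\cup Y$ leaves the $Y$ part untouched, which yields the displayed identity.

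With this lemma in hand I would assemble the chain for $X\cup Y$ in two phases. In the first phase I follow the chain of $X$ while carrying $Y$ along untouched,
$$X\cup Y\supset X^-\cup Y\supset X^{--}\cup Y\supset\cdots\supset\emptyset\cup Y=_E Y,$$
each step being a genuine direct descendant by the lemma, so this takes exactly $n$ steps and terminates at $Y$. In the second phase I continue along the chain of $Y$,
$$Y\supset Y^-\supset\cdots\supset\emptyset,$$
which takes $m$ further steps. The concatenation is a descendant chain of $X\cup Y$ of length $n+m$ reaching the empty quasiset. Since $X\cup Y$ is finite, the proposition preceding Definition \ref{d:DEFFIN} guarantees a unique counting number, so $qcard(X\cup Y)=_E n+m=_E qcard(X)+qcard(Y)$.

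The step I expect to be the main obstacle is the ambient-independence of the singleton in the commutation lemma: the definition of $<z>$ is explicitly relative to a quasiset, so one must argue carefully that enlarging $X'$ to $X'\cup Y$ neither enlarges nor shifts the singleton, and it is precisely here that Proposition \ref{e:LABASE} together with disjointness does the real work. A secondary point needing care is the tacit use of $Fin(X\cup Y)$: under the running assumption of this subsection this is granted, but otherwise one would first have to prove that $X\cup Y$ is finite, so that the uniqueness of the counting number applies to the chain just constructed.
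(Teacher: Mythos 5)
Your proof is correct in substance but follows a genuinely different route from the paper's. The paper does not construct one particular chain of $Z=_E X\cup Y$ and appeal to uniqueness; instead it analyses an \emph{arbitrary} descendant chain of $Z$, observing that at each step the removed singleton lies in exactly one of $X$ or $Y$ (by disjointness), so that every element $e$ of every chain of $Z$ decomposes as $e=_E\alpha\cup\beta$ with $\alpha$ in a chain of $X$ and $\beta$ in a chain of $Y$; since all chains of $X$ vanish in $qcard(X)$ steps and all chains of $Y$ in $qcard(Y)$ steps, \emph{every} chain of $Z$ vanishes in the sum, which simultaneously proves $Fin(X\cup Y)$ and computes its quasicardinal. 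This difference matters: because Definition \ref{e:FIN} quantifies over all descendant chains, exhibiting a single chain of length $n+m$ pins down $qcard(X\cup Y)$ only if $Fin(X\cup Y)$ is already known, and you correctly flag that your argument takes this as an input (from the section's blanket finiteness assumption) rather than delivering it. The paper's "all chains" argument buys the finiteness of the union for free, which is why its proof opens with exactly that claim. On the other side, your commutation lemma --- that the singleton $<z>$ computed relative to $X'\cup Y$ coincides with the one computed relative to $X'$, via Proposition \ref{e:LABASE} and disjointness, so that $(X'\cup Y)\setminus<z>=_E(X'\setminus<z>)\cup Y$ --- is a genuine tightening: the paper's step $Z^-=_E(X^-)\cup Y$ silently relies on precisely this ambient-independence of the strong singleton, and making it explicit is the most careful part of your write-up. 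If you want your proof to be self-contained at the paper's level, add the observation that your decomposition argument applied to an \emph{arbitrary} chain of $X\cup Y$ (not just the concatenated one) bounds its length, thereby establishing $Fin(X\cup Y)$ rather than assuming it.
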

\begin{proof}
First, let us see that the union of finite quasisets gives a
finite quasiset. Let $Z=_E X\cup Y$.
$Z\neq_E\emptyset\longrightarrow\exists w(w\in
Z)\longrightarrow(w\in X\vee w\in Y)$ but not both. If $w\in X$,
we have $$Z^{-}=_E Z\setminus<w>=_E(X\setminus<w>)\cup Y=_E
(X^{-})\cup Y$$ (if $w$ belongs to $Y$, we obtain an analogous
equation). Taking a direct descendent of $Z^{-}$, we obtain one of
two possibilities:
 $$Z^{--}=_E(X^{--})\cup Y$$
or:
 $$ Z^{--}=_E(X^{-})\cup(Y^{-})$$
$X$ and $Y$ are finite quasisets, so let us suppose that $X$ has
quasicardinal $m$ and that $Y$ has quasicardinal $n$. So all the
chains of $X$ vanish in $m$ steps and all the chains of $Y$ in $n$
steps. Then, we see that each element in a chain of $Z$, call it
$e$, can be expressed as $$e=_E\alpha\cup\beta$$ where $\alpha$ is
an element of some chain in $X$ and $\beta$ in $Y$. Then, any
chain in $Z$ vanishes in $m+n$ steps. This fact proves that the
quasicardinal of $Z$ is $m+n$. In all these demonstrations, the
expressions $n$, $m$ and $n+m$ steps, have to be associated with
the quasifunctions of definition \ref{e:FIN}. We point out once
more that we are always dealing with finite processes. \qed
\\
\end{proof}

In the following, we will use the following proposition:

\begin{prop}\label{e:ZZZ}
Let X be a quasiset such that $qcard(X)=_E1$. Then it follows that
there exists $z$ satisfying $z\in X$ and $X=_E<z>$.
\end{prop}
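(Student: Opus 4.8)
The plan is to unwind $qcard(X)=_E 1$ into the statement that a single singleton can be removed from $X$ to leave the empty quasiset, and then to convert ``$X$ minus a singleton is empty'' into ``$X$ equals that singleton''.

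First I would apply Definition \ref{d:DEFFIN}: since $qcard(X)=_E 1$, the quasiset $X$ is finite with $X\neq_E\emptyset$, and $n=1$ is the (unique) natural number witnessing Definition \ref{e:FIN}. By the Axiom of descendant chains (Axiom \ref{a:H1}) there is some $\gamma$ with $CD_X(\gamma)$, so Definition \ref{e:FIN} with $n=1$ hands me a quasifunction $F\subseteq\gamma\times 1^{+}$ satisfying $qf(F)$, $F(1)=_E X$, and, upon setting $j=1$ in the final clause (the only value of $j\in 1^{+}$ with $j\neq_E 0$), the relation $DD_{F(1)}(F(0))$, i.e.\ $DD_X(F(0))$.

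Next I would identify $F(0)$. Proposition \ref{e:FDE} applies (we have $X\neq_E\emptyset$, $Fin(X)$, $CD_X(\gamma)$ and $F$ as in Definition \ref{e:FIN}) and gives $F(0)=_E\emptyset$. Hence $DD_X(\emptyset)$ holds, and unfolding the definition of $DD_X$ produces an element $z\in X$ with $\emptyset=_E X\setminus<z>$. This $z$ is precisely the witness demanded by the statement; what remains is to deduce $X=_E<z>$ from $X\setminus<z>=_E\emptyset$.

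This last implication is the only mildly delicate point, since it must be run inside $Q$ with extensional equality in place of classical identity. By its construction as the intersection of all subquasisets of $X$ containing $z$, the singleton $<z>$ satisfies $<z>\subseteq X$. On the other hand $X\setminus<z>=_E\emptyset$ says that no element of $X$ falls outside $<z>$, that is $X\subseteq<z>$. The two inclusions together yield $X=_E<z>$ by the definition of $=_E$, which closes the argument. The care required is merely to check that the operation $\setminus$ and the relation $=_E$ of $Q$ obey the expected principle ``$A\setminus B=_E\emptyset$ together with $B\subseteq A$ implies $A=_E B$'', a fact that follows directly from how difference and extensional equality are defined in \cite{Un estudio}.
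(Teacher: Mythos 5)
Your proposal is correct and follows essentially the same route as the paper's own proof: extract the quasifunction $F$ from Definition \ref{e:FIN} with $n=1$, identify $F(0)=_E\emptyset$ (the paper asserts this tersely where you invoke Proposition \ref{e:FDE}), deduce $DD_X(\emptyset)$, and unfold it to obtain $z\in X$ with $X\setminus<z>=_E\emptyset$, whence $X=_E<z>$. Your final paragraph merely spells out, via the two inclusions $<z>\subseteq X$ and $X\subseteq<z>$, the step the paper dismisses with ``there is no other choice''.
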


\begin{proof}
Let $\gamma$ be a chain of $X$. Then there exists a quasifunction
$F$ satisfying $F(1)=_E X$ and (so) $F(0)=_E\emptyset$.
Furthermore, $F$ has the following property: $DD_{F(j)}(F(j-1))$,
for any $j\neq_E 0$. Then we have $DD_{F(1)}(F(0))$ and for that
reason $DD_{X}(\emptyset)$. From this, the existence of $z$ such
that $z\in X$ and $X\setminus<z>=_E\emptyset$  follows. But in
this case, there is no other choice than $X=_E<z>$. \qed \\
\end{proof}

\begin{theo}
$$\forall_Q X (qcard(\wp (X))=_E 2^{qcard(X)})$$
\end{theo}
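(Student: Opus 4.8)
The plan is to argue by induction on $n =_E qcard(X)$, using the additivity of quasicardinal proved in Theorem \ref{e:SUM} together with the structure of singletons established in Proposition \ref{e:LABASE} and Proposition \ref{e:ZZZ}. For the base case I would take $X =_E \emptyset$: here $\wp(\emptyset)$ has $\emptyset$ as its only element, so its quasicardinal is $1 =_E 2^0$ by Definition \ref{d:DEFFIN}. The case $qcard(X) =_E 1$ serves as an equally good anchor, since by Proposition \ref{e:ZZZ} we have $X =_E <z>$, and by Proposition \ref{e:LABASE} the only subquasisets of $<z>$ are $\emptyset$ and $<z>$ itself, which are distinguishable, giving $qcard(\wp(X)) =_E 2 =_E 2^1$.

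For the inductive step I would assume the statement for every finite quasiset of quasicardinal $n$ and let $X$ satisfy $qcard(X) =_E n+1$. Choosing $z\in X$ and putting $Y =_E X\setminus<z>$, we have $DD_X(Y)$ and hence $qcard(Y) =_E n$. The key observation is that $\wp(X)$ splits as a disjoint union. Indeed, for any $a\subseteq X$ the intersection $a\cap<z>$ is a subquasiset of $<z>$, so by Proposition \ref{e:LABASE} it equals either $\emptyset$ or $<z>$; in the first case $a\subseteq Y$, and in the second $<z>\subseteq a$. Writing $C =_E [c\in\wp(X):<z>\subseteq c]$ by separation, we get $\wp(X) =_E \wp(Y)\cup C$, and the two pieces are disjoint because every element of $C$ contains $<z>$ while no subquasiset of $Y$ does (since $z\notin Y$). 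Theorem \ref{e:SUM} then yields $qcard(\wp(X)) =_E qcard(\wp(Y)) + qcard(C)$, and the inductive hypothesis gives $qcard(\wp(Y)) =_E 2^n$.

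It remains to show $qcard(C) =_E 2^n$ as well, and this is the step I expect to be the main obstacle. The natural idea is that $a\mapsto a\cup<z>$ is a one-to-one correspondence between $\wp(Y)$ and $C$, with inverse $c\mapsto c\setminus<z>$: because $<z>$ is disjoint from every $a\subseteq Y$ and, by Proposition \ref{e:LABASE}, has no proper nonempty subquasiset, adjoining it neither merges two distinct $a$'s nor produces a member of $C$ that is missed. Intuitively this forces $C$ and $\wp(Y)$ to share their quasicardinal. The difficulty is that our quasicardinal is defined through descendant chains (Definition \ref{e:FIN}), not through bijections, so this correspondence must be recast as a statement about chains: one shows it carries a descendant chain of $\wp(Y)$ to a descendant chain of $C$ of the same length, with the removal of a singleton $<a>$ on the $\wp(Y)$ side matched by the removal of $<a\cup<z>>$ on the $C$ side, whence $C$ is finite with $qcard(C) =_E 2^n$. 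I would isolate this as a small lemma asserting that such a singleton-adjunction correspondence preserves quasicardinality; with it in hand, additivity gives $qcard(\wp(X)) =_E 2^n + 2^n =_E 2^{n+1}$, closing the induction. Finiteness of $\wp(X)$ comes for free along the way, since $\wp(Y)$ and $C$ are finite and, as shown in the proof of Theorem \ref{e:SUM}, the union of two finite quasisets is finite.
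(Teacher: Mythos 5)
Your proof is correct and follows essentially the same route as the paper's: induction on the quasicardinal, the dichotomy $a\cap<z>=_E\emptyset$ or $a\cap<z>=_E<z>$ from Proposition \ref{e:LABASE} to split $\wp(X)$ into the subquasisets of $X^{-}$ and those containing $<z>$, and Theorem \ref{e:SUM} to add the two counts. The one step you flag as the main obstacle --- showing that the collection of quasisets of the form $a\cup<z>$ has quasicardinal $2^{n}$, which requires recasting the adjunction correspondence in terms of descendant chains rather than bijections --- is exactly the step the paper also leaves at the level of an informal assertion, so your treatment is, if anything, the more candid of the two.
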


\begin{proof}
By induction on the quasicardinal. Let us suppose that $X$ has
quasicardinal $1$. Then, by proposition \ref{e:ZZZ} it follows
that $\exists z$ with $X=_E <z>$. Thus, by proposition
\ref{e:LABASE}, the only possible subquasisets are $\emptyset$ and
$X$, i.e., $<z>$. It is easy to convince oneself that all the
chains of $\wp (X)$ vanish in $2$ steps, and therefore it follows
that $qcard(\wp (X))=_E 2=_E 2^{qcard(X)}$.\\ Suppose now than the
assumption is true for any quasiset of quasicardinal $n$, and let
us see that this is true for $n+1$. Let $X$ be an arbitrary
quasiset of quasicardinal $n+1$ and $\gamma$ a chain of this
quasiset. Then, it follows that $qcard(X^{-})=_E n$, for if
$X^{-}$ is direct descendant of $X$, we have that there exists $z$
with $z\in X$ and $X=_E X^{-}\cup<z>$. For this reason:
$$qcard(X)=_E qcard(X^{-})+qcard(<z>)\longrightarrow n+1=_E
qcard(X^{-})+1$$ But in this case we have $qcard(P(X^{-}))=_E
2^{n}$ and as $P(X^{-})\subseteq \wp (X)$ it follows that there
exist at least $2^{n}$ different quasisets $\wp (X)$, counting the
empty quasiset too. But now we see that there remain $2^{n}$
quasisets formed by the union of each one of the quasisets of
$P(X^{-})$ with $<z>$. In this case, we see that $\wp (X)$ has at
least $2^{n}+2^{n}=_E 2^{n+1}$ elements. To see that we have
counted all the possible quasisets (all the elements of $\wp (X)$)
we proceed as follows. Let $Y$ be a quasiset such that $Y\in \wp
(X)$. Then, only one of two things may happen:
$Y\cap<z>=_E\emptyset$ or $\neg(Y\cap<z>=_E\emptyset)$. If the
first option is true, it follows that $Y\subseteq X^{-}$ and so it
has been already counted. If the second one is true, we have
$Y\cap<z>=_E<z>$ (using proposition \ref{e:LABASE}), and then
$Y=_E Y\setminus<z>\cup<z>$ and this is to say that $Y$ is the
union of an element of $P(X^{-})$ with $<z>$ and then, it has been
already counted.\qed
\\
\end{proof}

\subsection{On the existence of finite quasisets}\label{s:PECF}

We have seen that if there exist finite quasisets in the sense
given in this section, these would satisfy the axioms of
quasicardinality of $Q$. Now, it is worthwhile to ask: under which
conditions do pure and finite quasisets exist?

Assume that there exists a non empty and pure quasiset. Call it
$X$. Then, there exists $z$ such that $z\in X$, and its singleton
$<z>$ exists too. Which is the quasicardinal of $<z>$? The
following theorem gives the answer.

\begin{theo}
Let $X$ be a non empty quasiset and let $<z>$ be a singleton of
$X$. Then, $\exists qcard(<z>)$ and $qcard(<z>)=_{E}1$.
\end{theo}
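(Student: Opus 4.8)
We need to show that a strong singleton $<z>$ (for $z \in X$, $X$ a non-empty quasiset) is finite and has $qcard(<z>) =_E 1$. The word "finite" here means $Fin(<z>)$ in the sense of Definition \ref{e:FIN}: every descendant chain of $<z>$ reaches $\emptyset$ in finitely many steps, witnessed by a quasifunction $F \subseteq \gamma \times n^+$ with $<n;<z>> \in F$, and then $qcard(<z>)$ is that unique $n$.

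**Key available results.** The crucial prior result is Proposition \ref{e:LABASE}: the only subquasisets of $<z>$ are $\emptyset$ and $<z>$ itself. This immediately constrains what any descendant chain of $<z>$ can look like. Also Proposition \ref{e:ddx} (existence of a direct descendant of any non-empty quasiset) and Axiom \ref{a:H1} (existence of at least one descendant chain) guarantee that chains exist.

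**The plan.** First I would invoke Proposition \ref{e:LABASE} to pin down the descendant chains of $<z>$. By Definition \ref{e:cdx}, a descendant chain $\gamma$ is a collection of nested subquasisets of $<z>$ containing $<z>$ and closed under taking direct descendants until $\emptyset$ is reached. But the only subquasisets of $<z>$ are $\emptyset$ and $<z>$. Hence the only possible elements of $\gamma$ are $\emptyset$ and $<z>$, so $\gamma =_E \{<z>, \emptyset\}$, giving the single chain $<z> \supset \emptyset$. Indeed a direct descendant of $<z>$ is some $<z> \setminus <w>$ with $w \in <z>$, which is a proper subquasiset of $<z>$, hence (being a subquasiset) must be $\emptyset$ by Proposition \ref{e:LABASE}; so $DD_{<z>}(\emptyset)$ holds and the chain terminates in one step.

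**Finishing.** Having identified the unique chain, I would exhibit the witnessing quasifunction. Take $n =_E 1$ (so $n^+ =_E \{0,1\}$) and let $F$ be the quasifunction with $F(1) =_E <z>$ and $F(0) =_E \emptyset$. One checks the clauses of Definition \ref{e:FIN}: $F \subseteq \gamma \times 1^+$, the pair $<1;<z>> \in F$, every element of $\gamma$ (namely $<z>$ and $\emptyset$) is assigned a value in $1^+$, and $DD_{F(1)}(F(0))$ holds, i.e. $DD_{<z>}(\emptyset)$, which we verified above. Since the unique chain is handled by this $F$, we get $Fin(<z>)$ with witness $n =_E 1$. Uniqueness of $n$ follows from the uniqueness proposition already proved, so $qcard(<z>) =_E 1$ by Definition \ref{d:DEFFIN}.

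**Expected obstacle.** The only delicate point is justifying that the direct descendant of $<z>$ is genuinely $\emptyset$ rather than merely \emph{a} subquasiset: a direct descendant has the form $<z> \setminus <w>$, and one must confirm it is a proper subquasiset (so Proposition \ref{e:ddx} applies) and then conclude via Proposition \ref{e:LABASE} that it equals $\emptyset$. Here $w \in <z>$ forces $w \notin <z> \setminus <w>$, so the descendant is proper; being a subquasiset of $<z>$ distinct from $<z>$, it must be $\emptyset$. Everything else is routine verification of the clauses of Definition \ref{e:FIN}.
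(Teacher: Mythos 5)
Your proposal is correct and follows essentially the same route as the paper's own proof: both use Proposition \ref{e:LABASE} to force every direct descendant of $<z>$ to be $\emptyset$, identify the unique descendant chain as $[\,<z>,\emptyset\,]$, and exhibit the witnessing quasifunction $F=_E[<1;<z>>;<0;\emptyset>]$ with $n=_E 1$. Your extra care in checking that the direct descendant is a \emph{proper} subquasiset (hence $\emptyset$) only makes explicit a step the paper leaves implicit.
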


\begin{proof}
Let $\gamma$ be a descendant chain of $<z>$. By proposition
\ref{e:LABASE} we have that every direct descendant of $<z>$ is
the empty quasiset. Then the members of $\gamma$ can be shown
explicitly: $\gamma=_E [y\in \wp (X) : y=_E <z> \vee y=_E
\emptyset]$. So let $F$, be the following quasifunction (expressed
as a collection of ordered pairs): $$F=_E
[<1;<z>>;<0;\emptyset>]$$ $F$ satisfies all the conditions listed
in definition \ref{e:FIN} for $n=_E 1$. As $\gamma$ is an
arbitrary chain, it follows that, by definition of quasicardinal,
$qcard(<z>) =_E1$ \qed
\end{proof}

Thus we see that the existence of a pure quasiset with
quasicardinal equal to $1$ follows directly from the assumption of
the existence of some pure non empty quasiset. What do we need to
assume for granted the existence of pure quasisets with an
arbitrary but finite quasicardinal? To answer this question we
first give the following definition:

\begin{definition}
A pure quasiset $X$ is infinite if it is not finite.
\end{definition}

If we assume that there exists an infinite quasiset then, we have
a ``source'' of infinite and disjoint singletons, because to be
infinite implies that there is at least one descendent chain for
which it is not possible to find a quasifunction $F$ as defined in
\ref{e:FIN}. But with this arbitrarily big collection of
singletons, we can make use of theorem \ref{e:SUM} as follows. If
$<z>$ is a singleton of the chain and $<w>$ is another one which
satisfies $<z>\cap<w>=_E\emptyset$, we then have, by theorem
\ref{e:SUM}, that: $$qcard(<z>\cup<w>)=_E qcard(<z>)+qcard(<w>)=_E
1+1=_E 2$$ It is obvious that, with this procedure, we can
generate finite and pure quasisets with any desired quasicardinal.

\subsection{What do we mean by ``counting''?}

What do we mean when we say that a given set in $ZF$ has ten
elements? In that theory we say that its cardinal is ten, i.e., it
is equivalent to the ordinal number $10$. If the set is infinite,
we say that it has a cardinal, because we can prove that it can be
well ordered, and so it is equivalent to a set of ordinals which
has a minimal element (which in turn, is the cardinal of the given
set) \cite{Halmos}. This fact lets us assert that every set of
$ZF$ has an associated cardinal, and the cardinal is interpreted
as the number of elements of the set.

So we see that in $ZF$, the formalized idea of quantity of
elements bases itself in the concept of equivalence (the existence
of a bijective function between sets). We see that this aspect of
the formal construction of cardinality in $ZF$ is in
correspondence with the intuitive idea of quantity that we deal in
our every day experience: two sets have the same quantity of
elements if we can put them in biunivocal correspondence. But we
can detach of intuitive interpretations, and say that possessing a
cardinal (in the sense of $ZF$) is simply a property of the sets
deduced from the axioms of $ZF$. Then it is found that the formal
development of the theory shows properties which depart radically
from our intuition, as the fact that there are many classes of
infinity, a non trivial result of set theory \cite{Halmos}. Thus
we see how formalism development enriches our concept of infinite,
showing properties which are by other means inaccessible to
intuitive reasoning.

What happens in $Q$? As it can be proved that pure quasisets
cannot be well ordered \cite{Un estudio}, it follows that the
argument used in $ZF$ to prove that to every set corresponds a
single cardinal no longer holds. This means that in $Q$ we cannot
use the same technique as in $ZF$ if we want to assign a
quasicardinal to every quasiset in a consistent manner.

Is there any other technique which permits to assign a
quasicardinal to any arbitrary given quasiset taking quasicardinal
as a derived concept? We have in part given an answer to this
query because, as shown in section \ref{s:Reformulation}, we can
assign a derived quasicardinal to a certain class of quasisets
(finite quasisets). This assignment rule is reasonable in the
following sense: if it is applied to finite sets (elements of
$ZFU$), the quasicardinal of definition \ref{d:DEFFIN} coincides
with the one of $ZF$, as can be readily checked out. Furthermore,
if we restrict ourselves to the classical part of $Q$, axioms
\ref{a:H1} and \ref{a:H2} are no longer needed, for they become
theorems of $ZF$ axioms.

Postulates \ref{a:H1} and \ref{a:H2} exposed in section
\ref{s:Reformulation} are interpreted as the translation, to the
language of $Q$, of the idea of extracting the elements one by
one. But this method is valid only for finite quasisets. The trick
of extracting the elements one by one until the empty quasiset is
reached fails for every thing which deserves to be called
infinite. But we make the following observation: it is possible to
find in $ZF$ examples of sets which possess non denumerable
descendant chains. So, it happens that the concept of descendant
chain is not only applicable to denumerable process (as a first
glance may suggest), and this may be useful to extend the
definition of quasicardinal to infinite quasisets. Thought we are
not prepared to solve the problem now, by taking into account the
discussion presented above we can formulate the following problem
in $Q$: it would be interesting to formulate an idea of infinite
and pure quasisets (without using the axioms of quasicardinality
of $Q$), and then find a way to compare them. This work would be
interesting because the intent to solve it will probably give rise
to new techniques (distinct and perhaps inequivalent to the ones
known in $ZF$) of thinking the infinite, and so, enrich our
knowledge of this intricate concept. We think that this is an
interesting line of work because obtaining satisfactory results in
this direction would be an advance in the development of a
generalized theory of cardinality. Furthermore, the Manin's
problem can be extended to the quest of developing a ``set
theory'' in which some ``sets'' do not possess an associated
cardinal. For the idea that every set must have an associated
number of elements is based in our every day experience as well as
the idea that its elements have to be individuals.

In this section we have proved that in a theory concerning
collections of truly indistinguishable objects (as $Q$), the
quantity of elements, has not necessarily be taken as a primitive
concept. This result encourages the search for a theory in which
it is impossible to assign a quasicardinal to certain quasisets in
a consistent manner, thus allowing to describe what it seems to
happen with some quantum systems, in which non individuality
expresses itself in the fact that particle number is not defined,
besides ontological indistinguishability. We will explore
elsewhere the possibility of using the non classical part of $Q$
without modifying the axioms listed in \cite{Un estudio}, to make
a construction analogous to the state space used in quantum
mechanics in order to incorporate undefined particle number
systems in the formalism.

\section{Conclusions}\label{s:Conclusions}
In this work, we have searched for a $Q$-like theory in which the
quasicardinal is not taken as a primitive concept alike $Q$. In
such a theory, to have a well defined quasicardinal would
eventually be a property that some quasisets possess, and others
do not, thus allowing the existence of quasisets for which their
quantity of elements is not well defined. This search was
motivated in the discussions given in sections \ref{s:QaR} and
\ref{s:Individuality}. For in \ref{s:QaR} we discussed the
existence of quantum systems with not well defined particle number
and studied whether they can be represented as quasisets. We saw
that this cannot be done, because every quasiset has a well
defined quasicardinal, and suggested that a possible way out is to
take quasicardinal as a derived concept. In section
\ref{s:Individuality}, we found other motivations for considering
quasicardinal as a derived concept. This was done regarding
particle number as a result of the measurement process, in the
sense that this process may be considered as the basis for the
particle number interpretation. In \ref{s:Reformulation} we saw
that it is possible to develop a theory about collections of
indistinguishable entities in which quasicardinal is not a
primitive concept. In doing this, we transcribed to the first
order language of $Q$ the idea of counting the elements of a
collection extracting their elements one by one, without knowing
which is which (because in $Q$ this query is not defined). Thus,
we introduced the concept of descendant chains. At the end of the
section we discussed the possibility of developing this method to
extend the definition to infinite sets.

In the construction shown in \ref{s:Reformulation} we have
reobtained that every (finite) quasiset has a well defined
quasicardinal. Perhaps a deeper transformation in the axiomatic of
$Q$ is required to obtain the desired result of undefined
quasicardinal. One way would be to explore the possibility of
developing not completely defined belonging relations. This is
done in Quaset theory (see \cite{Dalla Chiara}), but quasicardinal
is assumed as a primitive concept there, as in $Q$. Another
interesting proposal is to modify logical quantifiers such as
``$\exists$'' and ``$\forall$'' \cite{coskra}. The modification
could be made in such a way that the property of undefined
quasicardinal appear as a product of it.

The axiomatic variant exposed in \ref{s:Reformulation} shows
explicitly that in a theory about collections of indistinguishable
entities, the quasicardinal needs not be necessarily taken as a
primitive concept. This result encourages the research of more
complex axiomatic formulations, able to incorporate the quantum
systems with undefined particle number as sets of some kind, thus
enriching the Manin's problem.

\vskip1truecm

\noindent {\bf Aknowledgements} \noindent The authors thank Hector
Freytes for useful discussions. They also thank the careful
reading of the anonymous referees.  F. H. also thanks Newton da
Costa and D\'ecio Krause for their interesting commentaries.
\noindent This work was partially supported by the following
grants: PICT 04-17687 (ANPCyT), PIP N$^o$ 6461/05 (CONICET),
UBACyT N$^o$ X081.

\end{document}